\documentclass[letterpaper, 10 pt, conference]{ieeeconf}  

\IEEEoverridecommandlockouts                              
\overrideIEEEmargins
 
\usepackage{amsmath,amsfonts}
\usepackage{algorithmic}
\usepackage{array}
\usepackage[caption=false,font=normalsize,labelfont=sf,textfont=sf]{subfig}
\usepackage{textcomp}
\usepackage{stfloats}
\usepackage{url}
\usepackage{verbatim}
\usepackage{graphicx}
\hyphenation{op-tical net-works semi-conduc-tor IEEE-Xplore}
\def\BibTeX{{\rm B\kern-.05em{\sc i\kern-.025em b}\kern-.08em
    T\kern-.1667em\lower.7ex\hbox{E}\kern-.125emX}}
\usepackage{balance}
\usepackage{amsmath}
\usepackage{amssymb}
\usepackage{cite}
\usepackage{stfloats}
\pdfminorversion=4
\usepackage{amsthm}
\usepackage{indentfirst}
\usepackage{url}
\usepackage{array}
\usepackage[utf8]{inputenc}
\usepackage[english]{babel} 
\usepackage{graphicx}
\usepackage{epstopdf}
\graphicspath{ {images/} }
\usepackage{float}
\usepackage{tabularx}
\usepackage{booktabs}
\usepackage{wasysym}
\usepackage{multirow}
\usepackage{mathrsfs}
\usepackage{enumerate}
\usepackage{color}
\usepackage{slashbox}
\usepackage[ruled,vlined]{algorithm2e}
\usepackage{bm}
\usepackage{soul}
\usepackage{pdfpages}

\theoremstyle{plain}

\newtheorem{proposition}{Proposition}


\title{\LARGE \bf
Robust and Scalable Game-theoretic Security Investment Methods for Voltage Stability of Power Systems
}

\author{Lu An, Pratishtha Shukla, Aranya Chakrabortty, and Alexandra Duel-Hallen
\thanks{Dr. Lu An is with NVIDIA Corporation. Dr. Pratishtha Shukla is with Oak Ridge National Lab. Dr. Aranya Chakrabortty and Dr. Alexandra Duel-Hallen are with ECE Department, North Carolina State University. (e-mail: {\tt\small lan4@alumni.ncsu.edu}, {\tt\small pshukla@alumni.ncsu.edu}, {\tt\small achakra2@ncsu.edu}, {\tt\small sasha@ncsu.edu})}
}

\begin{document}

\maketitle
\thispagestyle{plain}
\pagestyle{plain}

\begin{abstract}
We develop investment approaches to secure electric power systems against load attacks where a malicious intruder (the attacker) covertly changes reactive power setpoints of loads to push the grid towards voltage instability while the system operator (the defender) employs reactive power compensation (RPC) to prevent instability. Extending our previously reported Stackelberg game formulation for this problem, we develop a robust-defense sequential algorithm and a novel genetic algorithm that provides scalability to large-scale power system models. The proposed methods are validated using IEEE prototype power system models with time-varying load uncertainties, demonstrating that reliable and robust defense is feasible unless the operator's RPC investment resources are severely limited relative to the attacker's resources.

\end{abstract}
\begin{keywords} Power Systems, Voltage stability, Load attacks, Game Theory, Security investment, Robust defense\end{keywords}

\section{Introduction}
Over the past decade, significant research has been performed on cybersecurity of electric power systems \cite{anu}, including security against various kinds of attacks on both generation and loads. With the proliferation of demand response and direct load control programs by utility companies across the United States, load attacks are becoming more common. Malicious attackers, for instance, can easily hack into the thermostats of domestic customers and covertly change their active and reactive power setpoints. When a large number of loads are manipulated in this way, the transmission grid can face a voltage collapse. In the literature so far, these attacks have mostly been studied from the point of view of detection and control \cite{rad}. Our objective in this paper is to formulate a RPC investment strategy that power system operators can adopt to secure the grid from this class of attacks, which may result in severe degradation of voltage stability \cite{simpson2016voltage}.

Game-theoretic methods, including Stackelberg games, have been widely applied to study security and optimization for wide ranges of cyber-physical systems, including power systems \cite{ Basar2019}. However, most game-theoretic investment approaches employ repeated games \cite{Basar2019}, which are not suitable when long-term, fixed security investment is desired. To address this issue, in \cite{an2020stackelberg} we developed a cost-based Stackelberg game (CBSG) to strategically allocate the players' long-term security investment resources. The cost-based Stackelberg equilibrium (CBSE) of this game not only optimizes the load attacker's and the system operator's payoffs, i.e., the increase and reduction of the voltage instability index \cite{simpson2016voltage}, respectively, but also saves their costs.

However, in \cite{an2020stackelberg} we assumed complete knowledge of the opponent's resources, which is idealistic for the defender, who acts first. In this paper, we develop and validate a robust-defense (i.e. robust-RPC) sequential method. Moreover, we assumed fixed values of constant-power loads in \cite{an2020stackelberg} while in this paper we consider a practical case of {\it time-varying} loads. Finally, the algorithm in \cite{an2020stackelberg} requires traversal search that has exponential complexity in the number of target loads. To address scalability of the CBSG, we develop an iterative, evolutionary, bidirectional, genetic algorithm (GA)-based method to find a CBSE, which improves upon previously investigated evolutionary methods for finding a Stackelberg equilibrium (SE) \cite{d2012equilibrium,vallee1999off} by utilizing parallel evolution for both players. We evaluate performance of the proposed methods on IEEE 9-bus and 39-bus power system models and show that reliable and robust defense is feasible unless the operator's RPC investment resources are severely limited relative to the load attacker's resources.

The rest of the paper is organized as follows. Section \ref{sec:model} summarizes the power system and the CBSG based on \cite{an2020stackelberg}. A bidirectional evolutionary algorithm is introduced in Section \ref{sec:BPEGA}. A robust-defense method is proposed in Section \ref{sec:RD}. Numerical results for IEEE prototype models are contained in Section \ref{sec:NR}, and Section \ref{sec:conclusion} concludes the paper.

\section{Power System Model and Cost-based Stackelberg Game}\label{sec:model}
We consider a power system with $G\geq 1$ generators and $K\geq 1$ constant power loads, where the load buses are indexed as the first $K$ buses, followed by $G$ generator buses. Let us denote the steady-state voltage magnitudes at the load buses as ${{\bm{V}}_L} = [V_1,\cdots,V_K] \in {\mathbb R^K}$ and at the generator buses as ${{\bm{V}}_G} = [V_{K+1},\cdots,V_{K+G}] \in {\mathbb R^G}$. The admittance matrix of the network is denoted as ${\bm{Y}}={\bm{G}}+j{\bm{B}}$. We partition the susceptance matrix ${\bm{B}} \in \mathbb R^{(K +G) \times (K +G)}$ into four block matrices as
\begin{equation}
\label{eq:B_c5}
{\bm{B}} = \left( {\begin{array}{*{20}{c}}
{{{\bm{B}}_{LL}}}&{{{\bm{B}}_{LG}}}\\
{{{\bm{B}}_{GL}}}&{{{\bm{B}}_{GG}}}
\end{array}} \right)
\end{equation}
where ${{\bm{B}}_{LL}}$ contains the interconnections among loads and ${{\bm{B}}_{LG}}={{\bm{B}}_{GL}}^{T}$ represents the interconnections between loads and generators. Following the derivations in \cite{simpson2016voltage}, one can then define the open-circuit load voltage vector as ${\bm{V}}_L^* =  - {\bm{B}}_{LL}^{ - 1}{{\bm{B}}_{LG}}{{\bm{V}}_G}$,
and, subsequently, the symmetric stiffness matrix as
\begin{equation}
\label{eq:Q_cirt_c5}
{{\bm{Q}}_{crit}} \triangleq \frac{1}{4}{\rm{diag}}({\bm{V}}_L^*) \cdot {{\bm{B}}_{LL}} \cdot {\rm{diag}}({\bm{V}}_L^*),
\end{equation}
where $\rm{diag}(\cdot)$ denotes the diagonal matrix.

\begin{table*}[!ht]
\caption{CBSG description based on \cite{an2020stackelberg}}
\vspace{-0.1in}
\centering
\begin{tabularx}{\textwidth}{@{\extracolsep{\fill}}|c|p{3.9in}|}
\hline
\textbf{Term} & \textbf{Definition}  \\
\hline
$\bm{a}=[a_k: k\in \{1,\cdots, K\}] \in \mathbb R^K$ & Actions of the load attacker\\
\hline
$a_k 
\in \{0,1/(L_a-1), \allowbreak 2/(L_a-1),\cdots,1\}$ & Attacker's investment level on load $k$ (chance of success)\\
\hline
$L_a$ & The number of attacker's investment levels\\
\hline
$q_a^k \leq q_a^{k,\max}, \forall k$ & Covertness constraint for the attacker on load $k$\\
\hline
$\bm{O}^j = [o_1^j, \cdots, o_k^j, \cdots, o_K^j], \forall j=1,\cdots,2^K$ & The $j^{\text{th}}$ outcome of attack at all loads\\
\hline
$P_{{\bm{a}}}(\bm{O}^j) = \prod\limits_{k:{\forall o_k^j} = 1}{{a_k}} \prod\limits_{k:{\forall o_k^j} = 0} {\left( {1 - {a_k}} \right)}$ & The probability of outcome $\bm{O}^j$\\
\hline
${\bm{q}}_a^j = \bm{O}^j \odot {{\bm{q}}_a}$ & The incremental reactive power demand for $\bm{O}^j$\\
\hline
$\gamma_a$ & The scaled cost of the attack on load $k$ at full effort level\\
\hline
$C_a=\gamma_a{||{{\bm{a}}}||_1}\leq 1$ & The total cost of the attacker\\
\hline
$\bm{d}=[d_k: k\in \{1,\cdots, K\}] \in \mathbb R^K$ & RPC actions of the defender \\
\hline
$d_k\in\{0,1/(L_d-1),2/(L_d-1),\cdots,1\}, \forall k\in \mathscr{L}_{ctrl}$ & Defender's investment (RPC) level on the control device of load $k$\\
\hline
$L_d$ & The number of defender's investment levels\\
\hline
$q_d^{k,\max}$ & The maximum reactive power the defender can compensate on load $k\in \mathscr{L}_{ctrl}$\\
\hline
$q_d^k= d_k q_d^{k,\max}$ & The defender's compensation on load $k\in \mathscr{L}_{ctrl}$\\
\hline
$\gamma_d$ & The scaled investment cost of RPC per protected load\\
\hline
$C_d=\gamma_d{||{{\bm{d}}}||_1}\leq 1$ & The defender's total investment cost\\
\hline
${{\bm{Q}}_L^{j}}={{{\bm{Q}}_L^{n}} + {\bm{q}}_a^j - {{\bm{q}}_d}}$ & Tthe reactive power demand vector for the $j^{\text{th}}$ outcome $\bm{O}^j$ given actions ${\bm{a}}$ and ${\bm{d}}$\\
\hline
$U_{j}({\bm{q}}_a^j, {{\bm{q}}_d}) = Clip\left( {{{\left\| {{\bm{Q}}_{crit}^{ - 1}{\bm{Q}}_L^{j}} \right\|}_\infty }}; (\Delta^n, 1) \right)$ & The voltage instability index (\ref{eq:delta_c5}) restricted to $[\Delta^n, 1]$ for the $j^{\text{th}}$ outcome $\bm{O}^j$ and actions ${\bm{a}}$, ${\bm{d}}$\\
\hline
$U^a({\bm{a}},{\bm{d}}) = \sum\limits_j^{{2^K}} {{P_{\bm{a}}}({{\bm{O}}^j})U_{j}({\bm{q}}_a^j, {{\bm{q}}_d})}$ & The attacker's expected utility given actions ${\bm{a}}$ and ${\bm{d}}$\\
\hline
$U^d({\bm{a}},{\bm{d}}) = - U^a({\bm{a}},{\bm{d}})$ & The defender's utility given actions ${\bm{a}}$ and ${\bm{d}}$\\
\hline
$(\bm{a}_{o}^*,\bm{d}_{o}^*)$ & Cost-based Stackelberg equilibrium (CBSE)\\
\hline
\end{tabularx}
\label{table:CBSG}
\vspace{-0.2in}
\end{table*}

Let ${{\bm{Q}}_L} = [Q_1,\cdots,Q_K] \in \mathbb R^K$ denote the $K$-dimensional real vector that represents the \textit{reactive power setpoints} at the load buses. Using (\ref{eq:Q_cirt_c5}), the \textit{voltage instability index} of the system is defined as
\begin{equation}
\label{eq:delta_c5}
\Delta  = ||{\bm{Q}}_{crit}^{ - 1}{{\bm{Q}}_L}|{|_\infty },
\end{equation}
which is easily computable and accounts for the structure of the entire grid topology. The $k^{\text{th}}$ entry of the matrix-vector product ${\bm{Q}}_{crit}^{ - 1}{{\bm{Q}}_L}$ captures the stability stress on load $k$, with $||\cdot||_\infty$ identifying the maximally stressed node. According to Theorem 1 in \cite{simpson2016voltage}, the power flow equation will have a unique, stable solution if $\Delta<1$. Equivalently, $\Delta\geq1$ indicates that at least one load bus in the model $i$ is overly stressed and can be responsible for a voltage collapse. We refer to $1-\Delta$ as the \textit{voltage stability margin} \cite{van2007voltage}. The larger the value of $\Delta$, the narrower the stability margin is and the closer the power system is to a voltage collapse.
Finally, let ${{\bm{Q}}_L^{n}}$ denote the system's \textit{nominal reactive power setpoint vector}. Then the \textit{nominal voltage instability index}  ${\Delta^{n}}$ is computed as the value of $\Delta$ (\ref{eq:delta_c5}) using ${{\bm{Q}}_L^{n}}$.

As proven in \cite[Supp. 6]{simpson2016voltage} and \cite{an2020stackelberg}, the voltage instability index $\Delta$ increases as the reactive power demands of the loads grow. The attacker can attempt to increase the reactive power demands at the load buses by breaking into the loads and adding an incremental vector
\begin{equation}
\label{eq:qa}
    {{\bm{q}}_a} = [q_a^1,\cdots,q_a^K] \in \mathbb R^K, 
\end{equation}
to ${{\bm{Q}}_L^{n}}$, thus driving ${\Delta}$ towards 1. Moreover, the attacker can make such load attacks covert by designing the entries of ${{\bm{q}}_a}$ small enough that they maintain the load bus voltages to be within their usual allowable range of 0.9 per unit (pu) to 1.1 pu while sill pushing ${\Delta}$ towards 1. As the attack is at the device level rather than at the system level, the state estimator placed at the local substation might be unable to detect it. To prepare for possible future attacks, the operator, or the defender, can switch on pre-installed voltage control devices, such as shunt capacitors and power electronic converters, to compensate for the potential increase in consumption in advance. We assume generators provide reactive power support as usual to maintain power balance, but extra support required to compensate for the attacks is provided by the control devices installed at a set of the load buses denoted $\mathscr{L}_{ctrl}=\{l_n: n=\{1,\cdots, N\}\}$, where $l_n$ is the load bus index and $N \leq K$. Thus, the $K$-dimensional RPC vector for all loads can be denoted as
\begin{equation}
\label{eq:qd}
    {{\bm{q}}_d}=[q_d^1,\cdots,q_d^K] \in \mathbb R^K, 
\end{equation}
where $q_d^k=0, \forall k\notin \mathscr{L}_{ctrl}$. If an attack is successful at each load, the overall reactive power balance becomes ${{\bm{Q}}_L}={{{\bm{Q}}_L^{n}} + {\bm{q}}_a - {{\bm{q}}_d}}$. The goal of the system operator (defender) is to strategically compensate for the attacker's actions and to avoid the voltage collapse by maintaining $\Delta$ as close as possible to the nominal $\Delta^{n}$.

The zero-sum CBSG in this paper is based on the CBSG in \cite{an2020stackelberg}. It is summarized in Table \ref{table:CBSG}. The attacker's and the operator's utilities are given by the expected value of $\Delta$ (\ref{eq:delta_c5}) restricted to $[\Delta^n, 1]$ and its opposite, respectively. The Cost-Based Backward Induction (CBBI) in \cite{an2020stackelberg} computes a CBSE, i.e., the load attack and RPC investment pair that provides the same players' payoffs as any SE but saves the attacker's and defender's costs. Theorem I in \cite{an2020stackelberg, robust-SG-shukla} and \cite[Appx.B]{An2020thesis} demonstrates existence of CBSE and other CBSG properties.

Finally, we assumed in \cite{an2020stackelberg} that the system model is given by the \textit{nominal} model.
However in practice, the reactive power setpoint vector ${{\bm{Q}}_L}$ is time-variant and is uncertain a priori. Consider a set of possible uncertain models $\{\mathscr{M}_i \;|\; i= 1,\cdots, M\}$ due to the fluctuation of real-time power consumption. When the nominal model is used by the operator and the attacker to compute their investment strategies $(\bm{a}_{o}^*,\bm{d}_{o}^*)$, there is a mismatch with the actual CBSE of the model $\mathscr{M}_i$. To evaluate the fractional difference of the utilities of the nominal model and uncertain model $\mathscr{M}_i$ at a CBSE of the nominal model, we 
\begin{equation}
\label{eq:mu_nom}
{\mu _{i}}\%  = \left| {\frac{{U^a({\bm{a}}_{o}^*,{\bm{d}}_{o}^*) - U_i^a({\bm{a}}_{o}^*,{\bm{d}}_{o}^*)}}{{U^a({\bm{a}}_{o}^*,{\bm{d}}_{o}^*)}}} \right| \times 100\%,
\end{equation}
where the payoffs of the system operator and the load attacker at $({\bm{a}}_o^*,{\bm{d}}_o^*)$ are given by $U_i^a({\bm{a}}_o^*,{\bm{d}}_o^*) = - U_i^d({\bm{a}}_o^*,{\bm{d}}_o^*)$, which are obtained from the ideal players' utilities in Table \ref{table:CBSG} by substituting the \textit{nominal} reactive power vector ${{\bm{Q}}_L^{n}}$ of the nominal model by the \textit{$i^{\text{th}}$ model's nominal reactive power vector} ${\bm{Q}}_L^{n,i}$ and replacing $\Delta^n$ by $\Delta^{n,i}$, the nominal voltage instability index of the \textit{$i^{\text{th}}$} model. Note that ${\mu_{nom}} = 0$.

\section{A Bidirectional Evolutionary Method for Computing a CBSE}\label{sec:BPEGA}
The CBBI algorithm in \cite{an2020stackelberg} is a traversal searching method with the complexity of ${\mathcal{O}}\left( {L_a^KL_d^N} \right)$. To reduce the computational complexity, we employ the following bidirectional parallel evolutionary GA-based (BPEGA) method (Algorithm 1). In Algorithm \ref{alg:BPEGA}, the population sizes of the each generation of the attacker's and defender's strategies are represented by even non-negative integers $S_a$ and $S_d$, respectively.
In Step 2 of Algorithm \ref{alg:BPEGA}, in each generation, each player determines the fitness value (performance metric) of the current population. 
For the defender's strategy candidate ${{\bm{d}}} \in POP^t_d$ where $POP^t_d$ denotes the defender's current strategy population, an attacker's best response within its strategy population $POP^t_a$ is given by ${g_{tmp}}({\bm{d}}) =  \mathop {\arg\max }\limits_{{\bm{a}} \in POP^t_a} U^a({\bm{a}},{\bm{d}})$.
Thus, the fitness value of each ${{\bm{d}}} \in POP^t_d$ is given by
\begin{equation}
\label{eq:fd}
fit_d (\bm{d}) =  U^d({g_{tmp}}({\bm{d}}),{\bm{d}}), \forall {{\bm{d}}} \in POP^t_d.
\end{equation}
Suppose that $\bm{d}'\in POP^t_d$ has the highest fitness value within the current population
\begin{equation}
\label{eq:d_prime}
\bm{d}' =  \mathop {\arg\max }\limits_{{\bm{d}} \in POP^t_d} fit_d (\bm{d}).
\end{equation}
For each value of $\bm{d}'$ that satisfies (\ref{eq:d_prime}), the attacker assigns the fitness value to all ${{\bm{a}}} \in POP^t_a$ as
\begin{equation}
\label{eq:fa}
fit_a (\bm{a}) = U^a(\bm{a},{\bm{d}'}), \forall {{\bm{a}}} \in POP^t_a.
\vspace{-0.2in}
\end{equation}
\setlength{\textfloatsep}{0pt}
\begin{algorithm}[!h]
\SetAlgoLined
\textbf{Parameter Initialization:} Population sizes $S_a$ and $S_d$, crossover probability $P_c$, mutation rate $P_m$, maximum number of generations $T$.
Current generation $t=0$\; 
\textbf{Step 1. Population Initialization:} Randomly selected feasible initial populations for both players $POP^0_a=\{{{\bm{a}}}_1, \cdots, {{\bm{a}}}_{S_a}\}$ and $POP^0_d=\{{{\bm{d}}}_1, \cdots, {{\bm{d}}}_{S_d}\}$, where $\forall \bm{a}\in POP^0_a$ and $\forall \bm{d}\in POP^0_d$  which satisfy the attacker's and defender's cost constraints, respectively\;
\While{the termination criteria are not satisfied,}{
\textbf{Step 2. Evaluation:} The defender and attacker compute ${U^d}({\bm{a}},{\bm{d}})$ and ${U^a}({\bm{a}},{\bm{d}})$ for all ${{\bm{d}}} \in POP^t_d$ and ${{\bm{a}}} \in POP^t_a$ and evaluate all individuals in the current generation to compute their fitness values based on (\ref{eq:fd}) and (\ref{eq:fa})\;
\For {Attacker and Defender}{
\textbf{Step 3. Selection:} Select $S_a/2$ (or $S_d/2$) pairs of parents $tmp_P^a$ (or $tmp_P^d$) using the Roulette Wheel selection method \cite{liu1998stackelberg}\;
\textbf{Step 4. Reproduction:} Apply crossover with probability $P_c$ and mutation operation with rate $P_m$ \cite{liu1998stackelberg} to generate $S_a$ (or $S_d$) children $tmp_c^a$ (or $tmp_c^d$)\;
\textbf{Step 5. Check feasibility:} For each individual in $tmp_c^a$ (or $tmp_c^d$), check if it is a feasible solution to attacker's (or defender's) cost constraint. Include all feasible children in the set $tmp_{c,f}^a$ (or $tmp_{c,f}^d$)\;
\textbf{Step 6. Combine and sort:} Combine the current generation $POP^t_a$ (or $POP^t_d$) with the set of feasible children $tmp_{c,f}^a$ (or $tmp_{c,f}^d$). Sort by the fitness value in the descending order. Sort the individuals with the same fitness values by their investment cost in ascending order. The $S_a$ (or $S_d$) individuals with the highest ranking are selected as the next generation $POP^{t+1}_a$ (or $POP^{t+1}_d$)\;
}
$t\leftarrow t+1$
}
\textbf{Step 7.} Apply the CBBI algorithm \cite{an2020stackelberg} to the final generation $POP^T_a$ and $POP^T_d$ to determine $({\bm{a}_o^*},{\bm{d}_o^*})$\;
\caption{Bidirectional Parallel Evolutionary Genetic Algorithm}\label{alg:BPEGA}
\end{algorithm}

In Step 6, if several individuals have the same fitness value and cost, the individuals who were selected in an earlier generation are placed ahead of those selected later. Since the ``combine and sort" process guarantees that the individual with the highest fitness value among the members of the current generation and of the feasible children set is selected for the next generation, the proposed BPEGA algorithm is an elitist GA \cite{deb2002fast}.

The iteration will stop when either of the following two conditions is satisfied: (1) For each player, all individuals in the current population are identical, i.e. ${\bm a}_i = {\bm a}_j, \forall {\bm a}_i, {\bm a}_j \in POP^t_a$ and ${\bm d}_i = {\bm d}_j, \forall {\bm d}_i, {\bm d}_j \in POP^t_d$; (2) The iteration has reached the preset maximum iteration number $T$.

Finally, a GA converges when a sequence of objective function evaluations approaches the maximum of the objective function as the number of iterations $T$ tends to infinity \cite{d2012equilibrium}. Since BPEGA employs parallel evolution of both players, the convergence result of \cite{vallee1999off}, which assumes only the defender's evolution, is not applicable to Algorithm \ref{alg:BPEGA}.

\vspace{-0.1in}
\begin{proposition}\label{prop-5-1}
Assume a crossover probability $P_c>0$ and a mutation rate $P_m>0$. As the number of iterations $T$ tends to infinity, both players' expected utilities generated by the BPEGA (Algorithm \ref{alg:BPEGA}) converge to the utilities at any SE and the strategy pair selected by the BPEGA has the lowest players' costs among all SEs of the CBSG.
\end{proposition}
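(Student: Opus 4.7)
The plan is to split the statement into two parts: (i) the expected utilities produced by BPEGA converge to an SE value of the CBSG as $T\to\infty$, and (ii) among all SEs, the strategy pair returned in Step~7 has the lowest players' costs. Since the discrete investment levels make both $\Gamma^a$ and the defender's action set finite, the combined evolution $(POP^t_a, POP^t_d)$ is a Markov chain on a finite state space. With $P_m>0$, mutation assigns strictly positive probability to reaching any feasible individual from any current one, and the elitist ``combine and sort'' in Step~6 guarantees the best-ever individual is never lost. These are the standard ingredients for almost sure convergence of an elitist GA to the global optimum of its fitness landscape \cite{back1993evolutionary,bhandari1996genetic,hartl1990global}, which I would import as black boxes.

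The principal obstacle, and the step that distinguishes this argument from prior single-player GA convergence proofs, is the bilevel coupling: the defender's fitness $fit_d(\bm{d})=U^d(g_{tmp}(\bm{d}),\bm{d})$ in (\ref{eq:fd}) uses the best response within the \emph{current} attacker population rather than the true best response $g(\bm{d})$ from (\ref{eq:a_gd}), and analogously the attacker evaluates (\ref{eq:fa}) against the current best defender $\bm{d}'$ in (\ref{eq:d_prime}). To reduce the bilevel dynamics to a conventional elitist GA on the true Stackelberg objective, I would argue that positive-mutation ergodicity implies every feasible $\bm{a}$ and every feasible $\bm{d}$ appears in its population infinitely often almost surely. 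Elitism then preserves, for each defender candidate $\bm{d}$ persisting in $POP^t_d$, the true best response $g(\bm{d})$ once it is sampled by mutation; hence eventually $g_{tmp}(\bm{d})=g(\bm{d})$ and $fit_d(\bm{d})$ coincides with $U^d(g(\bm{d}),\bm{d})$. A symmetric argument yields correct attacker fitness, so the classical elitist convergence results apply and the best-ever leader fitness converges a.s.\ to the SE value $\max_{\bm{d}} U^d(g(\bm{d}),\bm{d})$; by the zero-sum property noted after (\ref{eq:Ud_nom}), the attacker's expected utility converges to its SE counterpart as well.

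For part (ii), ergodicity together with elitism imply that, for sufficiently large $T$, every SE strategy appears in the final populations $POP^T_a$ and $POP^T_d$ with probability tending to one. The secondary tie-breaker in Step~6, which sorts equal-fitness individuals in ascending cost order, ensures that a low-cost representative from each equivalence class of SE strategies is retained. Applying the CBBI procedure of \cite{an2020stackelberg} to these finite sets in Step~7 then selects, by the defining property of a CBSE (Theorem~1 of \cite{an2020stackelberg}), the SE pair with the smallest defender cost and, among those, the smallest attacker cost. The hard part throughout will be justifying rigorously that the mutation-driven exploration makes the inner optimizations in (\ref{eq:gd_temp}) and (\ref{eq:d_prime}) recover the true best responses on every surviving candidate, so that the coupled time-varying fitness landscape collapses to a well-behaved single-objective elitist GA to which the cited convergence theorems apply.
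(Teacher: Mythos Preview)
Your proposal shares the paper's backbone: finiteness of both action spaces, a Hartl-type sampling lemma guaranteeing that every feasible pair is visited almost surely, and elitism in Step~6 to preserve the best pair once found, followed by CBBI in Step~7 to extract the lowest-cost SE. In that sense the overall architecture matches.

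Where you depart from the paper is precisely in the bilevel-coupling issue you flag as the ``principal obstacle.'' The paper's argument is considerably shorter: it simply asserts that once a lowest-cost SE pair $(\bm{a}^*_o,\bm{d}^*_o)$ is simultaneously present in $POP^t_a\times POP^t_d$, the defender's fitness $fit_d(\bm{d}^*_o)$ is the highest in that generation, so elitism and cost-sorting retain it forever after. It does not engage with the possibility you correctly identify---that a non-SE defender $\bm{d}$ could have an inflated fitness $U^d(g_{tmp}(\bm{d}),\bm{d})>U^d(g(\bm{d}),\bm{d})$ exceeding the SE value whenever $POP^t_a$ lacks a good response to $\bm{d}$. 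Your plan to argue that $g_{tmp}(\bm{d})$ eventually coincides with $g(\bm{d})$ for every persisting $\bm{d}$ is the natural way to close this gap, and it is a genuine strengthening of the paper's reasoning. The paper's brevity buys readability but leaves the time-varying-fitness issue implicit; your route is longer but more honest about where the work lies. Be aware, though, that your resolution still needs a careful statement: elitism on the attacker side is with respect to $fit_a(\bm{a})=U^a(\bm{a},\bm{d}')$ for the \emph{current} $\bm{d}'$, so a true best response $g(\bm{d})$ is not automatically retained once sampled if $\bm{d}'\neq\bm{d}$; you will need the recurrence argument (infinitely many simultaneous visits) rather than a single visit plus persistence.
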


\begin{proof}
The proof is based on \cite{robust-SG-shukla}, \cite[Prop.5.1]{An2020thesis} and is omitted due to the space constraints.
\end{proof}

The computational complexity of the BPEGA is ${\mathcal{O}}(T S_a S_d) \ll {\mathcal{O}}({L_a^K L_d^N})$, the complexity of the CBBI algorithm, when the system size is large and $T S_a S_d \ll {L_a^K L_d^N}$. In practice, power systems with different system sizes might require different $T$ values to achieve convergence as discussed in Sec.\ref{sec:NR}.

\section{Robust Defense against Load Attacks}\label{sec:RD}

In the CBBI algorithm of \cite{an2020stackelberg} or the BPEGA Algorithm \ref{alg:BPEGA}, the load attacker does not require the knowledge of the operator's cost per load $\gamma_d$ to determine its best response since it is the follower and thus observes the defender's RPC strategy before acting. However, the system operator acts first and thus relies on the knowledge of $\gamma_a$. When the defender does not have complete information about the attacker's resources, the proposed CBSG (Table {\ref{table:CBSG}}) is unsuitable. In the {\it robust-defense (RD)} algorithm described below, the defender employs a lower bound $\gamma_a^{est}$ on the attacker's actual cost $\gamma_a$, where $0 \leq \gamma_a^{est} \leq \gamma_a$, to compute its RPC strategy.

\noindent \textbf{Step 1:} \textbf{(a)} For each defender's RPC action ${\bm{d}}$ that satisfies the cost constraint, the defender estimates the set of attacker's best responses (i.e., load attack strategies) $\mathcal{G}(\gamma_a^{est}, {\bm{d}})$, where ${g}(\gamma_a^{est},{{\bm{d}}}) \in \mathcal{G}(\gamma_a^{est},{\bm{d}})$ if
\begin{align}
\label{eq:g_est_d}
{g}(\gamma_a^{est},{{\bm{d}}}) &= \mathop {\arg \max }\limits_{{\bm{a}}} {U^a}({{\bm{a}}},{{\bm{d}}}), \hfill  \\
 \mbox{s.t.}\;\;\;{\kern 1pt} & \gamma_a^{est}{||{{\bm{a}}}||_1} \le 1, \; q_a^k \leq q_a^{k,\max}, \forall k. \nonumber 
 \vspace{-0.1in}
\end{align}

\noindent\textbf{(b)} For each ${\bm{d}}$, the \textit{smallest-cost} estimated best response is
\begin{equation}
\label{eq:g_est_do}
g_o(\gamma_a^{est}, {\bm{d}}) = \mathop {\arg \min }\limits_{{g}(\gamma_a^{est},{\bm{d}}) \in \mathcal{G}(\gamma_a^{est},{\bm{d}})} ||{g}(\gamma_a^{est},{\bm{d}})||_1.
\end{equation}

\noindent\textbf{Step 2: (a)} The defender (operator) determines the set of its RPC strategies $\mathcal{D}_{est}$ where ${{\bm{d}}_{RD}^*} \in \mathcal{D}_{est}$ if
\begin{align}
\label{eq:d_RD_est}
{{\bm{d}}_{RD}^*} &= \mathop {\arg \max }\limits_{\bm{d}} {U^d}(g_o(\gamma_a^{est}, {\bm{d}}),{\bm{d}}),  \hfill\\
  \mbox{s.t.}\;\;\;{\kern 1pt} & {\gamma_d}||{{\bm{d}}}||_1 \le 1. \nonumber
  \vspace{-0.1in}
\end{align}

\noindent\textbf{(b)} A strategy in (\ref{eq:d_RD_est}) with the \textit{smallest cost} is chosen
\begin{equation}
\label{eq:do_RD_est}
{{\bm{d}}_{RD}^{o}} = \mathop {\arg \min }\limits_{{{{\bm{d}}_{RD}^*}}\in \mathcal{D}_{est}} ||{{\bm{d}}_{RD}^*}||_1.
\end{equation}

\noindent\textbf{Step 3: (a)} By observing the defender's RPC action ${{\bm{d}}_{RD}^{o}}$, the attacker finds its actual set of best responses $\mathcal{G}(\gamma_a, {{\bm{d}}_{RD}^{o}})$, where the load attack strategy ${{{\bm{a}}_{RD}^*}} \in \mathcal{G}(\gamma_a, {{\bm{d}}_{RD}^{o}})$ if
\begin{align}
\label{eq:g_do}
{{{\bm{a}}_{RD}^*}} &= {g}(\gamma_a,{{\bm{d}}_{RD}^{o}}) = \mathop {\arg \max }\limits_{{\bm{a}}} {U^a}({{\bm{a}}},{{\bm{d}}_{RD}^{o}}), \hfill  \\
 \mbox{s.t.}\;\;\;{\kern 1pt} & \gamma_a{||{{\bm{a}}}||_1} \le 1; q_a^k \leq q_a^{k,\max}, \forall k, \nonumber
 \vspace{-0.1in}
\end{align}

\noindent\textbf{(b)} If multiple solutions exist in $\mathcal{G}(\gamma_a, {{\bm{d}}_{RD}^{o}})$, the attacker chooses a load attack strategy with the \textit{smallest cost}
\begin{equation}
\label{eq:ao_RD}
{{\bm{a}}_{RD}^{o}} = {g_o}(\gamma_a,{{\bm{d}}_{RD}^{o}}) = \mathop {\arg \min }\limits_{{{{\bm{a}}_{RD}^*}}\in \mathcal{G}(\gamma_a, {{\bm{d}}_{RD}^{o}})} ||{{\bm{a}}_{RD}^*}||_1.
\end{equation}
A strategy pair $({{\bm{a}}_{RD}^{o}}, {{\bm{d}}_{RD}^{o}})$ is a \textit{cost-based RD solution}. Note that the actual defender's payoff differs from its estimate ${U^d}(g_o(\gamma_a^{est}, {{\bm{d}}_{RD}^{o}}),{{\bm{d}}_{RD}^{o}})$ and is given by ${U^d}({{\bm{a}}_{RD}^{o}},{{\bm{d}}_{RD}^{o}})$. 

{\theorem \label{thm:RD} \ \\
In the proposed RD method (steps 1$\sim$3 above): \\
\textbf{(a)} The actual utility of the defender is at least as large as its estimated utility, i.e.
\begin{equation}
\label{eq:thm2b_Ud}
{U^d}({{\bm{a}}_{RD}^{o}},{{\bm{d}}_{RD}^{o}}) \geq {U^d}(g_o(\gamma_a^{est}, {{\bm{d}}_{RD}^{o}}),{{\bm{d}}_{RD}^{o}}).
\end{equation}
\textbf{(b)} The defender's actual payoff ${U^d}({{\bm{a}}_{RD}^{o}},{{\bm{d}}_{RD}^{o}})$ increases with its estimate of the attacker's cost $\gamma_a^{est}$ and approaches its payoff $U^d({\bm{a}}_{o}^*,{\bm{d}}_{o}^*)$ at a CBSE of the CBSG in Table \ref{table:CBSG} as $\gamma_a^{est}$ tends to the actual $\gamma_a$ value.
}
\begin{proof} \
The proof is referred to \cite{An2023CDC-Sup} and is omitted due to the space constraints.
\end{proof}

{\remark
Theorem \ref{thm:RD}(a) demonstrates that assuming the worst-case attack scenario provides robust RPC when the system operator is uncertain about the load attacker's resources. Moreover, from Theorem \ref{thm:RD}(b), we conclude that as the operator's knowledge of the attacker's cost improves, the defender's actual payoff of the RD solution increases and approaches the payoff of the ideal game.
}

To evaluate the defender's utility loss due to its uncertainty about the attacker's budget, we compute the mismatch (loss) of actual defender's utility using the RD method relative to that at a CBSE of the ideal CBSG. For each set of $\gamma_a^{est}$, $\gamma_a$, and $\gamma_d$ values, this mismatch is computed as
\begin{equation}
\label{eq:mu_rdsg}
{\mu _{RD}}\%   = 
\left| {\frac{{{U^d}({{\bm{a}}_{RD}^{o}},{{\bm{d}}_{RD}^{o}}) - U^d({\bm{a}}_{o}^*,{\bm{d}}_{o}^*)}}{{U^d({\bm{a}}_{o}^*,{\bm{d}}_{o}^*)}}} \right| \times 100\%.
\end{equation}

Finally, the evolutionary Algorithm \ref{alg:BPEGA} can be easily modified to perform the RD method and shown to converge to an RD solution. Moreover, as for the CBSG, the RD method is developed assuming the nominal model, but can be applied to any uncertain model $i$ and evaluated using a metric similar to (\ref{eq:mu_nom}). The details are omitted for brevity.

\section{Numerical Results}
\label{sec:NR}
In Sec.\ref{sec:game-9}, we validate computationally efficient BPEGA method proposed in Sec.{\ref{sec:BPEGA}} by comparing its performance with the traversal algorithm for the IEEE 9-bus system. Then in Sec.\ref{sec:game-39}, we employ GA methods to analyze performance of the IEEE 39-bus system.

\subsection{IEEE 9-bus System} \label{sec:game-9}
The IEEE 9-bus system has 6 load buses, which are all potential targets for the attacker. We assume that the set of buses with control devices installed is $\mathscr{L}_{ctrl}=\{4,5,6,8\}$. In the simulation, $q_a^{k,\max}$ is determined by the covertness constraint, and we set $q_d^{k,\max}=2$ pu, $\forall k$. It was verified that these compensations do not violate the $[0.9, 1.1]$ pu voltage range for any bus. The $M$ load-uncertain models are created by adding independent zero-mean Gaussian random variables to the components of the nominal reactive power setpoint vector $\bm{Q}_L^{n}$, i.e., the $i^{\text{th}}$ model's nominal reactive power setpoint vector is given by $\bm{Q}_L^{n,i}=\bm{Q}_L^{n}(1+\bm{\epsilon}^i)$, where $\bm{\epsilon}^i\sim N(0,{\sigma}^2)$, $i = \{1,\cdots,M\}$, and the standard deviation $\sigma=0.1$ \cite{bo2009probabilistic}. Note that the number of models $M=20$, which satisfies the criteria for the sample size given the margin of error (MOE) of 0.05 and the confidence interval (CI) of $95\%$.
\begin{figure}[h]
\vspace{-0.1in}
  \centering
    \includegraphics[width=0.45\textwidth]{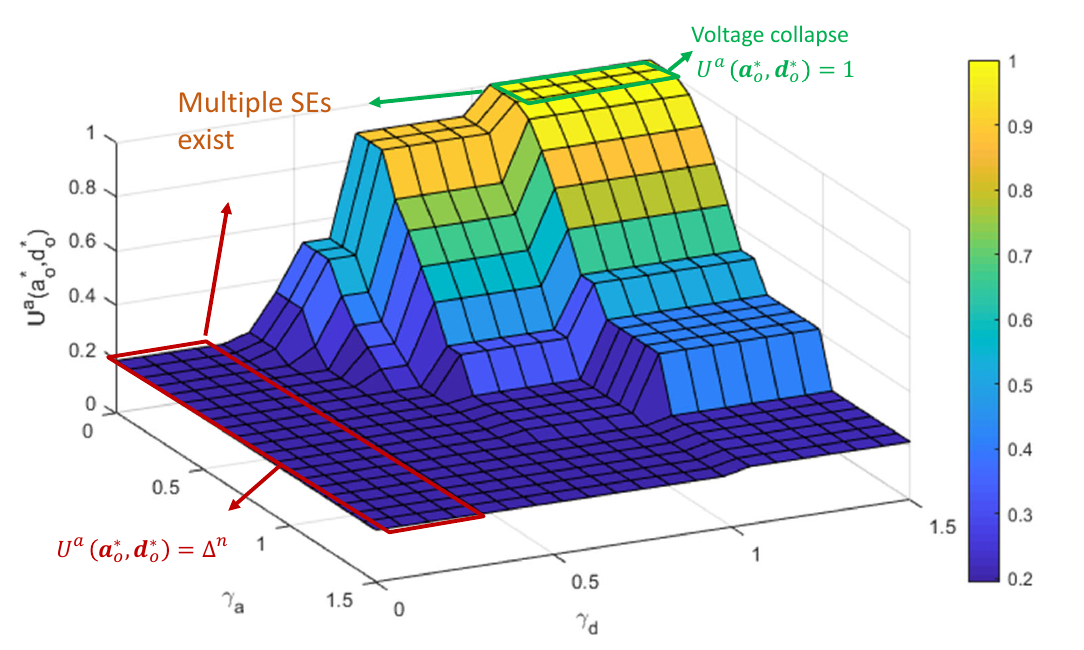}
    \vspace{-0.1in}
    \caption{Attacker's expected utility at CBSE vs. $\gamma_a$ and $\gamma_d$ for $L_a=L_d=3$, IEEE 9-bus system}
    \label{fig:Ua_avg_33}
\vspace{-0.1in}
\end{figure}
\begin{figure}[h]
\vspace{-0.1in}
  \centering
    \includegraphics[width=0.45\textwidth]{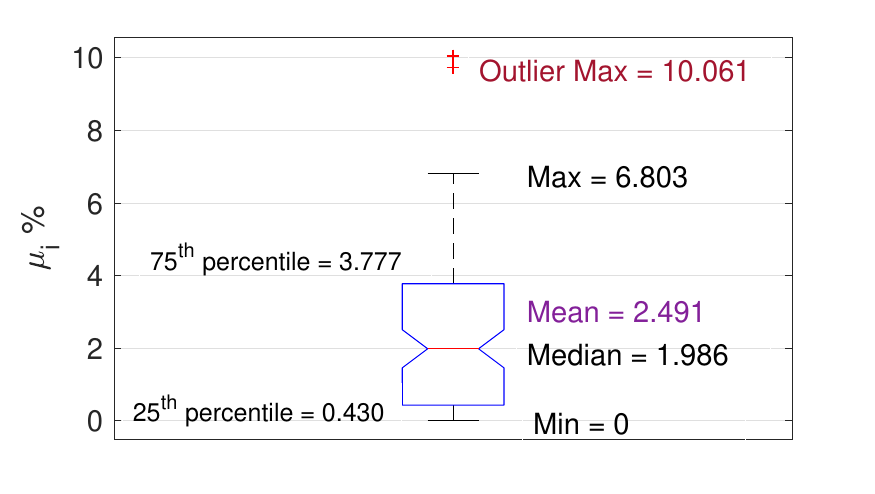}
    \vspace{-0.1in}
    \caption{Boxplot of the utility difference $\mu_{i}\%$ (\ref{eq:mu_nom}) over the randomly generated set of 20 models and 5 cost pairs ($\gamma_a$, $\gamma_d$): (0.1, 0.1), (0.1, 1.5), (0.75, 0.75), (1.5, 0.1), (1.5, 1.5); $L_a=L_d=3$; the IEEE 9-bus system}
    \label{fig:mismatch}
\end{figure}

Fig. \ref{fig:Ua_avg_33} shows the attacker's expected utility $U^a({\bm{a}_o^*},{\bm{d}_o^*})$ at a CBSE for varying $\gamma_a$ and $\gamma_d$ assuming a fixed nominal model. These trends are consistent with Theorem 1 in \cite{an2020stackelberg}. In particular, each player's utility improves as that player's cost decreases while the opponent's cost is fixed, and both players target ``important" loads with the greatest impact on $\Delta_0$, but the attacker tends to avoid the loads protected by the defender \cite{an2020stackelberg}. Fig. \ref{fig:mismatch} represents the utility difference (\ref{eq:mu_nom}) statistics for five different cost pairs. We observe that most uncertain models experience modest utility differences from the CBSE shown in Fig. \ref{fig:Ua_avg_33}, demonstrating robustness of CBSG to load uncertainty in the IEEE 9-bus power system.

Next, we validate convergence of BPEGA Algorithm \ref{alg:BPEGA}. We set $S_a=30$, $S_d=20$, $P_c=0.85$, $P_m=0.05$, and $T=30$. These initialization parameters are selected experimentally and reflect a trade-off between convergence and computational complexity for the IEEE 9-bus system.
For the nominal system model, Fig. \ref{fig:BPEGA_conv} shows that the BPEGA Algorithm \ref{alg:BPEGA} converges to a CBSE obtained by the traversal CBBI algorithm in fewer than 15 iterations, thus confirming Proposition 1. Similar results were obtained for other cost pairs, demonstrating fast convergence of the BPEGA Algorithm \ref{alg:BPEGA} \cite{An2020thesis}.
\begin{figure}[h]
\vspace{-0.1in}
  \centering
    \includegraphics[width=0.45\textwidth]{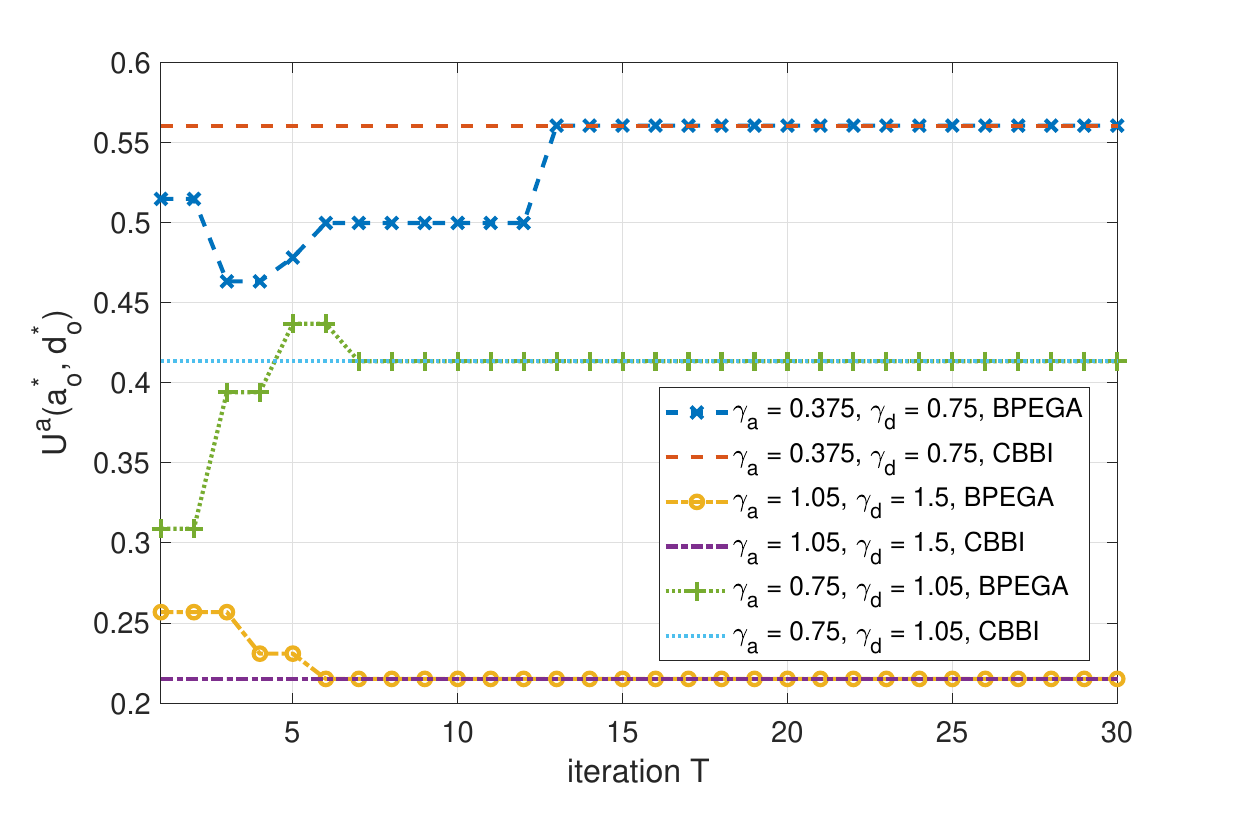}
    \vspace{-0.1in}
    \caption{Convergence of the attacker's utility in the BPEGA Algorithm \ref{alg:BPEGA} with $L_a=L_d=3$; the IEEE 9-bus system}
    \label{fig:BPEGA_conv}
    \vspace{-0.1in}
\end{figure}
\begin{figure}[h]
\vspace{-0.1in}
  \centering
    \includegraphics[width=0.45\textwidth]{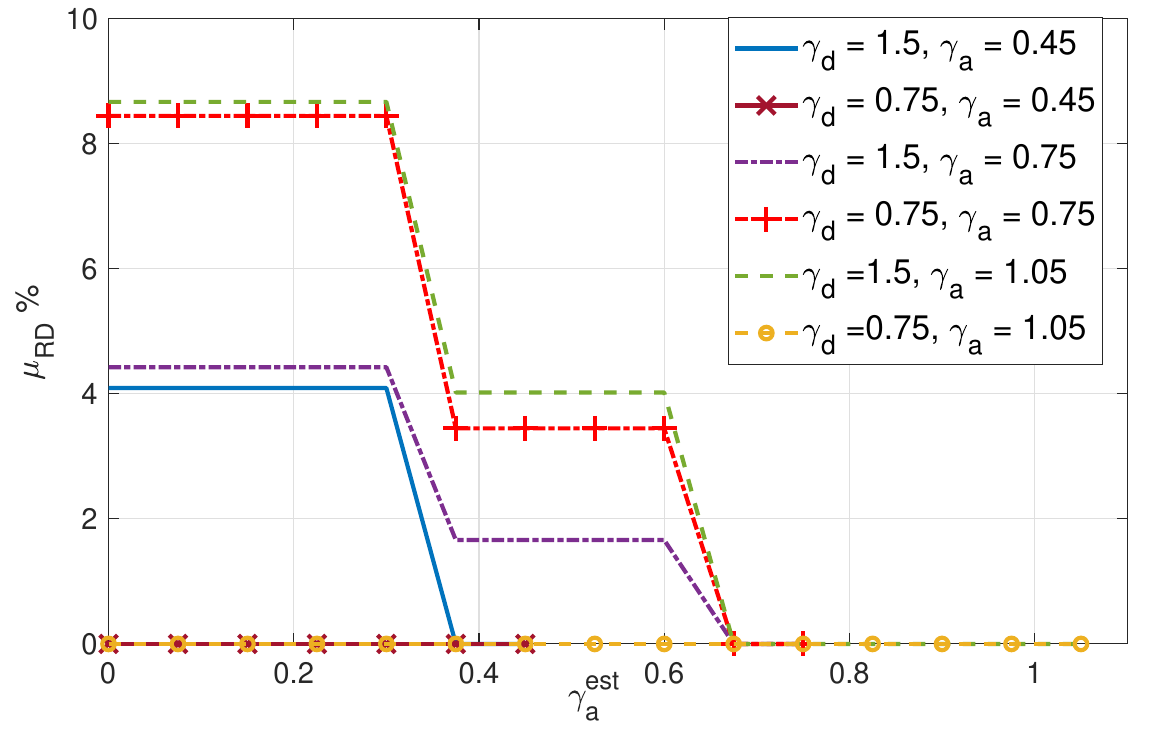}
    \vspace{-0.1in}
    \caption{The mismatch ${\mu_{RD}}\%$ (\ref{eq:mu_rdsg}) for varying $\gamma_a^{est}$ given $\gamma_a = \{0.45, 0.75, 1.05\}$, and $\gamma_d = \{0.75, 1.5\}$; $L_a=L_d=3$; the IEEE 9-bus system}
    \label{fig:RDSG_mismatch}
    \vspace{-0.1in}
\end{figure}

Finally, we compare the performance of the RD method for the nominal-model IEEE 9-bus system to that of the ideal CBSG by measuring the mismatch ${\mu _{RD}}\%$ (\ref{eq:mu_rdsg}) in Fig. \ref{fig:RDSG_mismatch} for varying costs $\gamma_a^{est}$ estimated by the system operator given selected $\gamma_a$ and $\gamma_d$ values. We observe that the mismatch reduces as the worst-case cost estimate $\gamma_a^{est}$ approaches the actual $\gamma_a$ value as shown in Theorem \ref{thm:RD}(b). We also found that the mismatch is zero for \textit{all} $\gamma_a^{est}$ values in many cases, e.g. when $(\gamma_a, \gamma_d) = (0.75, 0.45)$ or $(0.75, 1.05)$. Moreover, even when the most powerful attacker is assumed ($\gamma_a^{est} = 0$) while the actual $\gamma_a$ values are in the range $0 \leq \gamma_a \leq 1.5$ and $\gamma_d \in \{0.45, 0.75, 1.5\}$, the mismatch (\ref{eq:mu_rdsg}) for most cost pairs was less than 1\% with median of $\mu_{RD} = 0$ and 75\% percentile = 0.575, indicating that the RD method is robust to the system operator's uncertainty about the load attacker's resources. On the other hand, the defender's cost of the RD solution is at least as large as the defender's cost of a CBSE of the ideal CBSG. When the actual $\gamma_a$ is large, $\gamma_a^{est} = 0$, and $\gamma_d$ is relatively small, e.g. $\gamma_a \geq 1.05$ and $\gamma_d \leq 0.225$, the defender's overpayment using the RD method compared to the ideal CBSG can be as high as $300\%$ since the defender grossly overestimates the attacker's budget in the RD method. However, the defender's excess cost of the RD solution reduces as its cost per load $\gamma_d$ increases and $\gamma_a^{est}$ approaches the actual $\gamma_a$ value. Finally, the attacker's investment cost is the same in both methods.

 \vspace{-0.1in}
\subsection{IEEE 39-bus system}\label{sec:game-39}
Next, we apply the proposed methods to the IEEE 39-bus system, which contains 29 loads. We assume that the set of buses with control devices installed is $\mathscr{L}_{ctrl}=\{5,6,7,8,10,\allowbreak11,13\}$. This set includes both players' five most important loads \cite{an2020stackelberg}. The $M = 20$ load-uncertain models were created using the same method for the IEEE 9-bus system (Sec.\ref{sec:game-9}).
\begin{figure}[h]
\vspace{-0.2in}
  \centering
    \includegraphics[width=0.45\textwidth]{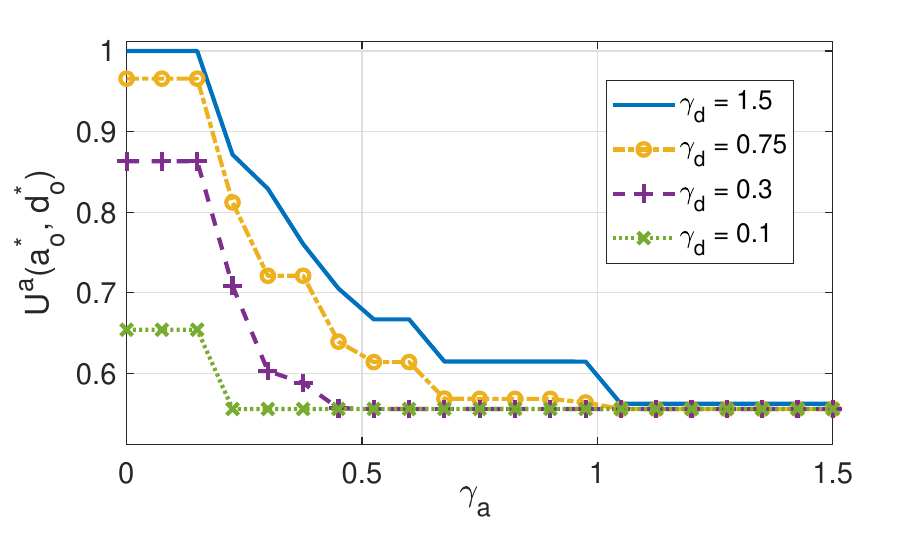}
    \vspace{-0.1in}
    \caption{Attacker's utility at a CBSE found by the BPEGA algorithm with $L_a=L_d=3$; the IEEE 39-bus system}
    \label{fig:U39_CBSE}
    \vspace{-0.1in}
\end{figure}

For levels of investment $L_a=L_d=3$, the complexities of the CBBI algorithm and the BPEGA Algorithm \ref{alg:BPEGA} are ${\mathcal{O}}\left( {3^{29}\times 3^{7}} \right)$ and ${\mathcal{O}}(T S_a S_d) = {\mathcal{O}}(30\times 30\times 20) = {\mathcal{O}}(18000)$, respectively. Since the former complexity is too high, we apply the BPEGA Algorithm \ref{alg:BPEGA} and the RD method for IEEE 39-bus system in Fig. \ref{fig:U39_CBSE} and \ref{fig:mismatch_rd_39}. We observe the same performance trends as in Fig. \ref{fig:Ua_avg_33}. Note that voltage collapse occurs only when the attacker has very small cost $\gamma_a$ while the defender's cost $\gamma_d \gg \gamma_a$. We conclude that voltage collapse can be successfully prevented in both IEEE 9-bus and 39-bus systems unless the defender's security resources are disproportionately limited relative to the attacker's budget.
\begin{figure}[H]
\vspace{-0.1in}
  \centering
    \includegraphics[width=0.45\textwidth]{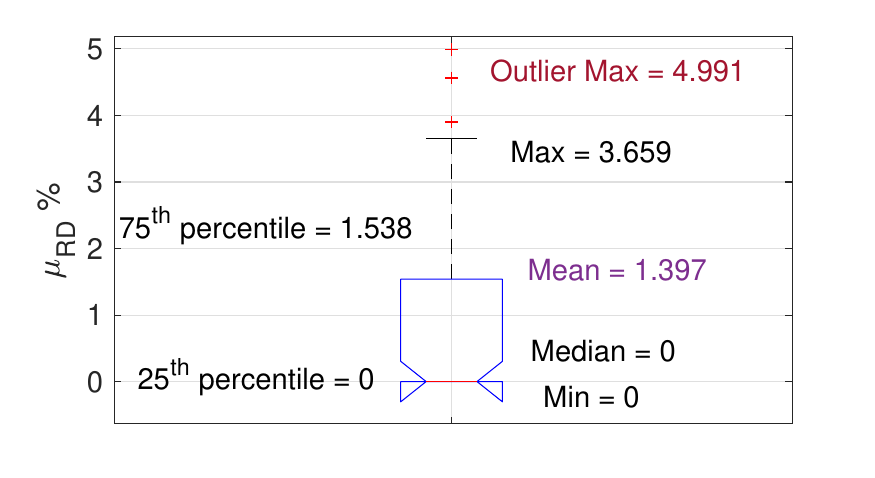}
    \vspace{-0.1in}
    \caption{Boxplot of the mismatch ${\mu_{RD}}\%$ (\ref{eq:mu_rdsg}) of the RD solution for $\gamma_a^{est} = 0$ over the cost pairs: $\gamma_a = [0:0.075:1.5]$, $\gamma_d = \{0.45, 0.75, 1.5\}$; $L_a = L_d = 3$; the IEEE 39-bus system}
    \label{fig:mismatch_rd_39}
    \vspace{-0.1in}
\end{figure}
Moreover, we found that for the IEEE 39-bus system, most uncertain models experience slight utility differences (\ref{eq:mu_nom}) ranging from $0\%$ to $1.836\%$, confirming the robustness of proposed CBSG to load uncertainty. Furthermore, Fig. \ref{fig:mismatch_rd_39} shows the boxplot for the mismatch (\ref{eq:mu_rdsg}) of the defender's utilities of the RD solution for selected defender's costs when the most powerful attacker is assumed ($\gamma_a^{est} = 0$) and the actual $\gamma_a$ values are in the range $0 \leq \gamma_a \leq 1.5$. We observe that the mismatch for most cost pairs is modest, confirming that the RD solution is robust to imperfect knowledge of the attacker's budget by the defender. Finally, for both IEEE 9-bus and 39-bus systems, we found that the RD method is robust to model uncertainty, with statistics similar to those in Fig. \ref{fig:mismatch}.

 \vspace{-0.1in}
\section{Conclusion}\label{sec:conclusion}
We investigated scalable, robust game-theoretic investment solutions for securing electric power systems from load attacks and associated voltage collapse. To address the scalability of the proposed methods to large power systems, a bidirectional, parallel, evolutionary generic algorithm (BPEGA) was developed. Moreover, we proposed a robust-defense (RD) method to address realistic scenarios where the defender lacks full information about the attacker's budget. It is demonstrated that the system operator is able to preserve voltage stability for both load- and/or information-uncertain scenarios unless its reactive power compensation resources are much more limited than the load attacker's resources.

\bibliographystyle{ieeetran}
\bibliography{ref_all, ref_SG, ref_power}

\begin{thebibliography}{10}
\providecommand{\url}[1]{#1}
\csname url@samestyle\endcsname
\providecommand{\newblock}{\relax}
\providecommand{\bibinfo}[2]{#2}
\providecommand{\BIBentrySTDinterwordspacing}{\spaceskip=0pt\relax}
\providecommand{\BIBentryALTinterwordstretchfactor}{4}
\providecommand{\BIBentryALTinterwordspacing}{\spaceskip=\fontdimen2\font plus
\BIBentryALTinterwordstretchfactor\fontdimen3\font minus
  \fontdimen4\font\relax}
\providecommand{\BIBforeignlanguage}[2]{{%
\expandafter\ifx\csname l@#1\endcsname\relax
\typeout{** WARNING: IEEEtran.bst: No hyphenation pattern has been}%
\typeout{** loaded for the language `#1'. Using the pattern for}%
\typeout{** the default language instead.}%
\else
\language=\csname l@#1\endcsname
\fi
#2}}
\providecommand{\BIBdecl}{\relax}
\BIBdecl

\bibitem{anu}
S.~M. Dibaji, M.~Pirani, D.~B. Flamholz, A.~M. Annaswamy, K.~H. Johansson, and
  A.~Chakrabortty, ``A systems and control perspective of cps security,''
  \emph{Annual Reviews in Control}, vol.~47, pp. 394 -- 411, 2019.

\bibitem{rad}
S.~{Amini}, F.~{Pasqualetti}, and H.~{Mohsenian-Rad}, ``Dynamic load altering
  attacks against power system stability: Attack models and protection
  schemes,'' \emph{IEEE Transactions on Smart Grid}, vol.~9, no.~4, pp.
  2862--2872, July 2018.

\bibitem{simpson2016voltage}
J.~W. Simpson-Porco, F.~D{\"o}rfler, and F.~Bullo, ``Voltage collapse in
  complex power grids,'' \emph{Nature communications}, vol.~7, p. 10790, 2016.

\bibitem{Basar2019}
S.~Etesami and T.~Basar, ``{Dynamic Games in Cyber-Physical Security: An
  Overview},'' \emph{Dynamic Games and Applications}, pp. 1--30, 2019.

\bibitem{an2020stackelberg}
L.~An, A.~Chakrabortty, and A.~Duel-Hallen, ``A stackelberg security investment
  game for voltage stability of power systems,'' in \emph{2020 IEEE 59th
  Conference on Decision and Control (CDC)}, 2020.

\bibitem{d2012equilibrium}
E.~D'Amato, E.~Daniele, L.~Mallozzi, and G.~Petrone, ``Equilibrium strategies
  via ga to stackelberg games under multiple follower's best reply,''
  \emph{International Journal of Intelligent Systems}, vol.~27, no.~2, pp.
  74--85, 2012.

\bibitem{vallee1999off}
T.~Vall{\'e}e and T.~Ba{\c{s}}ar, ``Off-line computation of stackelberg
  solutions with the genetic algorithm,'' \emph{Computational Economics},
  vol.~13, no.~3, pp. 201--209, 1999.

\bibitem{van2007voltage}
T.~Van~Cutsem and C.~Vournas, \emph{Voltage stability of electric power
  systems}.\hskip 1em plus 0.5em minus 0.4em\relax Springer Science \& Business
  Media, 2007.

\bibitem{robust-SG-shukla}
P.~Shukla, L.~An, A.~Chakrabortty, and A.~Duel-Hallen, ``A robust stackelberg
  game for cyber-security investment in networked control systems,'' \emph{IEEE
  Transactions on Control Systems Technology}, vol.~31, no.~2, pp. 856--871,
  2023.

\bibitem{An2020thesis}
\BIBentryALTinterwordspacing
L.~An, ``Game-theoretic methods for cost allocation and security in {Smart
  Grid},'' 2020, {Ph.D. thesis}. [Online]. Available:
  \url{https://repository.lib.ncsu.edu/handle/1840.20/38411}
\BIBentrySTDinterwordspacing

\bibitem{liu1998stackelberg}
B.~Liu, ``Stackelberg-nash equilibrium for multilevel programming with multiple
  followers using genetic algorithms,'' \emph{Computers \& Mathematics with
  Applications}, vol.~36, no.~7, pp. 79--89, 1998.

\bibitem{deb2002fast}
K.~Deb, A.~Pratap, S.~Agarwal, and T.~Meyarivan, ``A fast and elitist
  multiobjective genetic algorithm: Nsga-ii,'' \emph{IEEE transactions on
  evolutionary computation}, vol.~6, no.~2, pp. 182--197, 2002.

\bibitem{An2023CDC-Sup}
\BIBentryALTinterwordspacing
L.~An, P.~Shukla, A.~Chakrabortty, and A.~Duel-Hallen, ``Supplementary material
  for: Robust and scalable game-theoretic security investment methods for
  voltage stability of power systems,'' 2023, {supplementary document}.
  [Online]. Available: \url{https://tinyurl.com/supp-doc-cdc23-an}
\BIBentrySTDinterwordspacing

\bibitem{bo2009probabilistic}
R.~Bo and F.~Li, ``Probabilistic lmp forecasting considering load
  uncertainty,'' \emph{IEEE Transactions on Power Systems}, vol.~24, no.~3, pp.
  1279--1289, 2009.

\end{thebibliography}

\includepdf[page=-]{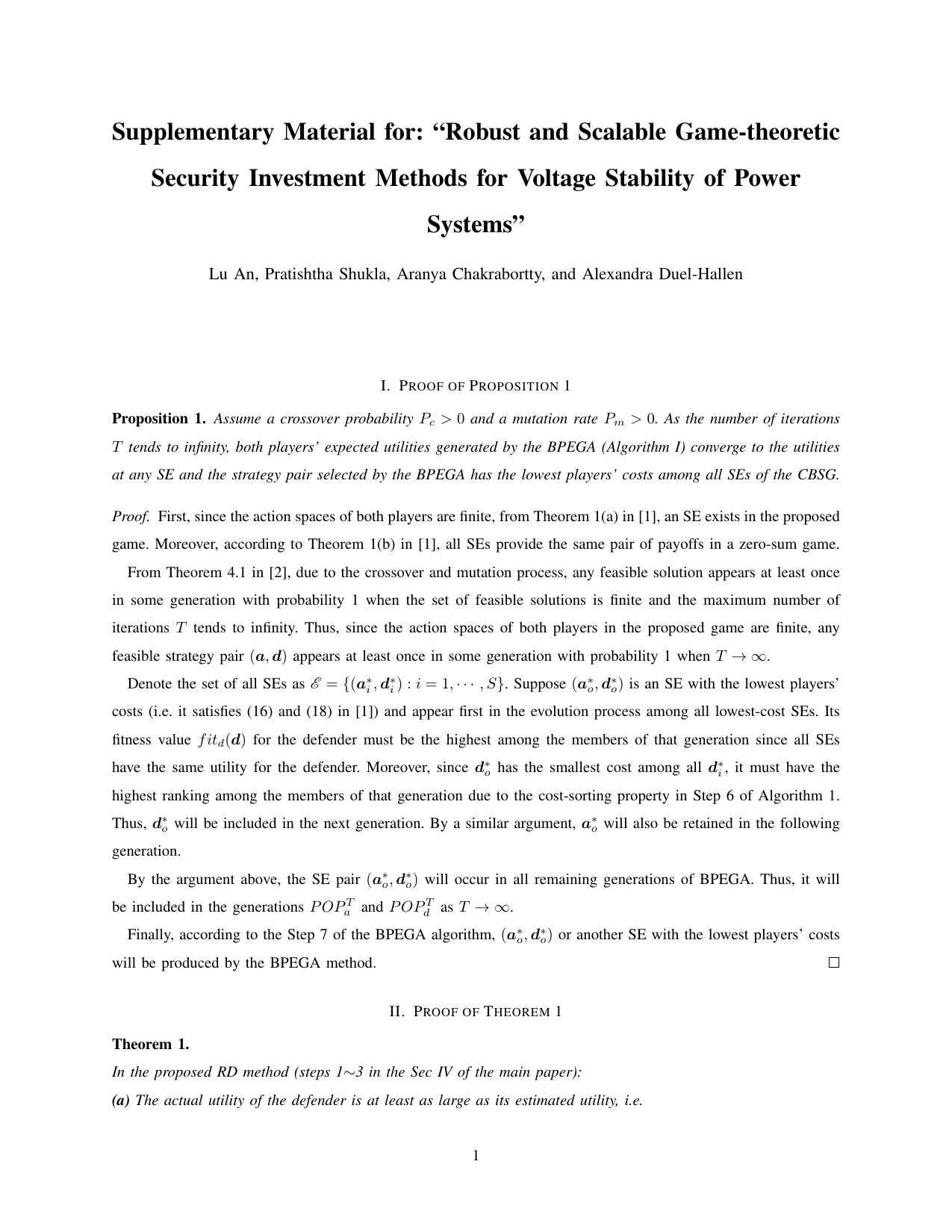}
\end{document}